  \newcommand{\defproblem}[3]{
  \vspace{2mm}
\noindent\fbox{
  \begin{minipage}{0.96\textwidth}
  #1\\
  {\bf{Input:}} #2  \\
  {\bf{Output:}} #3
  \end{minipage}
  }
  \vspace{2mm}
}
  \newenvironment{myalgorithm}[2][htbp]
  {%
    \setlength{\algomargin}{.2cm}
    \begin{center}
    \begin{minipage}{#2}
    \begin{algorithm2e}[#1]
    \small
     \let\Par=\par
       \def\par{\endgraf\vspace{.1cm}}
           \SetKw{To}{to}%
       \SetKw{Downto}{downto}%
           \SetKw{Or}{or}%
       \SetKwFor{Algo}{Algorithm}{}{}%
      \vspace{.15cm}%
   }
   {%
     \let\par=\Par\end{algorithm2e}%
     \end{minipage}%
     \end{center}%
   }
\newcommand{\WPM}{\textsc{WeightedPatternMatching}}
\newcommand{\WTM}{\textsc{WeightedTextMatching}}
\def\dd{\mathinner{.\,.}}
\newcommand{\cO}{\mathcal{O}}
\newcommand{\cco}{o}
\begin{document}
\frontmatter          % for the preliminaries
%
%\pagestyle{headings}  % switches on printing of running heads
%\addtocmark{Hamiltonian Mechanics} % additional mark in the TOC
%
\title{Fast Average-Case Pattern Matching on Weighted Sequences}

\author{Carl Barton\inst{1}
\and Chang Liu\inst{2}
\and Solon P.\ Pissis\inst{2}
}

\institute{$\!^1\ $The Blizard Institute, Barts and The London School of Medicine and Dentistry, Queen Mary University of London, UK \\
\email{c.barton@qmul.ac.uk}\\
$\!^2\ $Department of Informatics, King's College London, UK \\
\email{\{chang.2.liu,solon.pissis\}@kcl.ac.uk}
}
\maketitle

\begin{abstract}
A {\em weighted string} over an alphabet of size $\sigma$ is a string in which a set of letters may occur at each position with respective occurrence probabilities.
Weighted strings, also known as {\em position weight matrices} or {\em uncertain} sequences, naturally arise in many contexts.
In this article, we study the problem of weighted string matching with a special focus on average-case analysis.
Given a weighted {\em pattern} string $x$ of length $m$, a {\em text} string $y$ of length $n>m$, 
and a {\em cumulative weight threshold} $1/z$, defined as the minimal probability of occurrence of factors in a weighted string, 
we present an algorithm requiring average-case search time $\cco(n)$ for pattern matching for {\em weight ratio} $\frac{z}{m} <  \min\{\frac{1}{\log z},\frac{\log \sigma}{\log z (\log m + \log \log \sigma)}\}$.
For a pattern string  $x$ of length $m$, a weighted text string $y$ of length $n>m$, and a cumulative weight threshold $1/z$,
we present an algorithm requiring average-case search time $\cco(\sigma n)$ for the same weight ratio.
The importance of these results lies on the fact that these algorithms work in average-case sublinear search time in the size of the text, 
and in linear preprocessing time and space in the size of the pattern, for these ratios.
\end{abstract}
\section{Introduction}
\label{sec:intro}
A weighted string over some alphabet is a type of {\em uncertain} sequence in which {\em a set} of letters (instead of a single letter) may occur at each position with respective occurrence probabilities.
This notion was first introduced by Iliopoulos et al.~\cite{DBLP:conf/stringology/IliopoulosMPT03,DBLP:journals/jalc/IliopoulosMPT05} in 2003.
A great deal of research has been conducted ever since on weighted strings: for pattern matching~\cite{IPT04PatMatchWeightedSeq,Amir2006a,Amir2006b};
for computing various types of regularities~\cite{DBLP:conf/spire/IliopoulosPTTT04,DBLP:journals/jcb/ChristodoulakisIMPTT06,DBLP:journals/bmcbi/ZhangGI13,Barton2014}; 
for indexing~\cite{Amir2006a,Iliopoulos:2006:WST:1227505.1227512}; and for alignments~\cite{Amir2010}.

An \emph{alphabet} $\Sigma$ is a finite non-empty set of size $\sigma$, whose 
elements are called \emph{letters}. A \emph{string} on an alphabet $\Sigma$ is 
a finite, possibly empty, sequence of elements of $\Sigma$. The zero-letter 
sequence is called the \emph{empty string}, and is denoted by $\varepsilon$. The 
\emph{length} of a string $x$ is defined as the length of the sequence associated 
with the string $x$, and is denoted by $\vert x \vert$. We denote by $x[i]$, for 
all $0 \leq i < |x|$, the letter at index $i$ of $x$. Each index $i$, for all 
$0 \leq i <|x|$, is a position in $x$ when $x \neq \varepsilon$. It follows that 
the $i$-th letter of $x$ is the letter at position $i-1$ in $x$.

The {\em concatenation} of two strings $x$ and $y$ is the string of the letters of
$x$ followed by the letters of $y$; it is denoted by $xy$.
A string $x$ is a \emph{factor} of a string $y$ if there exist two strings $u$ 
and $v$, such that $y=uxv$. 
Consider the strings $x,y,u$, and $v$, such that $y=uxv$, if $u=\varepsilon$ then $x$ is a \emph{prefix} of $y$, if $v=\varepsilon$ then $x$ is a \emph{suffix} of $y$.
Let $x$ be a non-empty string and $y$ be a string, we say that there exists an \emph{occurrence} of $x$ in $y$, or more simply, that $x$ \emph{occurs in} $y$, 
when $x$ is a factor of $y$. Every occurrence of $x$ can be characterised by a 
position in $y$; thus we say that $x$ occurs at the \emph{starting position} $i$ in $y$ when $y[i \dd i + |x| - 1]=x$.

Single nucleotide polymorphisms, as well as errors introduced by wet-lab sequencing platforms
during the process of DNA sequencing, can occur in some positions of a DNA sequence. 
In some cases, these uncertainties can be accurately modelled as a {\em don't care} letter. 
However, in other cases they can be more subtly expressed, and, at each position of the sequence, a probability of occurrence can be assigned 
to each letter of the nucleotide alphabet; this process gives rise to a {\em weighted string} or a {\em position weight matrix}. For instance, consider a IUPAC-encoded~\cite{IUPAC} 
DNA sequence, where the ambiguity letter \texttt{M} occurs at some position of the sequence, representing either base \texttt{A} or  base \texttt{C}. 
This gives rise to a weighted DNA sequence, where at the corresponding position of the sequence, we can assign to each of \texttt{A} and \texttt{C} an occurrence probability of $0.5$.

A weighted string $x$ of length $n$ on an alphabet $\Sigma$ is a finite sequence of $n$ sets. 
Every $x[i]$, for all $0 \leq  i < n$, is a set of ordered pairs $(s_j, \pi_i(s_j))$, 
where $s_j \in \Sigma$ and $\pi_i(s_j)$ is the probability of having letter 
$s_j$ at position $i$. Formally, $x[i] = \{ (s_j, \pi_i(s_j)) | s_j \neq s_{\ell} \text{ for } j \neq \ell, \text{ and } \sum_{j} \pi_i(s_j) = 1 \}$.
A letter $s_j$ {\em occurs} at position $i$ of a weighted string $x$ if and only if
the {\em occurrence probability} of letter $s_j$ at position $i$, $\pi_i(s_j)$, is greater than 0.
A string $u$ of length $m$ is a {\em factor} of a weighted string if and only if it occurs at 
starting position $i$ with {\em cumulative occurrence probability} $\prod_{j=0}^{m-1}\pi_{i+j}(u[j])>0$.
Given a {\em cumulative weight threshold} $1/z \in (0,1]$, we say that factor $u$ is {\em valid}, or equivalently that factor $u$
has a valid occurrence, if it occurs at starting position $i$ and $\prod_{j=0}^{m-1}\pi_{i+j}(u[j]) \geq 1/z$.
Similarly, we say that letter $s_j$ at position $i$ is valid if $\pi_i(s_j) \geq 1/z$.
For succinctness of presentation, if $\pi_i(s_j)=1$ the set of pairs is denoted only by the letter $s_j$;
otherwise it is denoted by $[(s_{j_1}, \pi_i(s_{j_1})),\ldots,(s_{j_k}, \pi_i(s_{j_k}))]$.

In this article, we consider the following two problems.

\defproblem{\WPM}{a weighted string $x$ of length $m$, a string $y$ of length $n>m$, and a cumulative weight threshold $1/z \in (0,1]$}{all positions $i$ of $y$ where a valid factor of length $m$ of $x$ occurs}
\defproblem{\WTM}{a string $x$ of length $m$, a weighted string $y$ of length $n>m$, and a cumulative weight threshold $1/z \in (0,1]$}{all positions $i$ of $y$ where a valid factor $v$ of $y$ occurs and $x=v$}

In~\cite{Amir2006a}, Amir et al.~showed an $\cO(nz^2 \log z)$-time bound for pattern matching on weighted strings via reduction.
A direct $\cO(nz^2 \log z)$-time algorithm for solving this problem was recently presented in~\cite{DBLP:conf/cwords/BartonP15}.
Many other algorithms for solving these problems exist, but their runtime efficiency relies on the assumption of a given {\em constant} $z$ (cf.~\cite{IPT04PatMatchWeightedSeq,Iliopoulos:2006:WST:1227505.1227512}). 
In this article, we are interested in designing average-case efficient algorithms for certain realistic {\em weight ratios} $z/m$.

\paragraph*{\textbf{Our Contribution.}} 
We present efficient average-case algorithms for weighted string matching. 
Specifically, we present two new algorithms: one to solve problem \WPM~and another one to solve problem \WTM. 
Both algorithms can achieve average-case sublinear search time in the size of the text, 
and work in linear preprocessing time and space  in the size of the pattern.
Essentially, we show that they achieve these average-case search times depending on the number of
positions required, in every matching weighted factor of length $m$, to have a letter occurring with probability greater than $1-1/z$. 
We consider these conditions to be a quite realistic scenario in a wide range of applications, in particular, on molecular sequences~\cite{Yan01072005}.

\begin{comment}
mlog m/logm - m/logm 
Specifically, our contribution is twofold.
\begin{enumerate}
\item We present two new algorithms: one to solve problem \WPM~and another one to solve problem \WTM. Both algorithms can achieve average-case time and space linear in the size of the input.
Essentially, we show that the designed algorithms achieve average-case linear time if there exists at least one letter in every matching weighted factor with probability greater than $1-1/z$, that is, a quite realistic scenario.
\item We implemented these two algorithms and the fastest-known worst-case algorithm, which can solve both problems~\cite{DBLP:conf/cwords/BartonP15}, as a computer program.
Furthermore, we present an experimental study, using real and synthetic data, which demonstrates orders-of-magnitude superiority of our approach, in terms of runtime efficiency, compared to the worst-case approach.
\end{enumerate}
\end{comment}
\section{Properties and Auxiliary Data Structures}

We first start by providing a description of the key ideas on which our algorithms are based.
This informal description should help the reader understand the structure and the analysis of our algorithms later on.
These ideas are not all new but we find their combination in this context novel.

\begin{itemize}
	\item We perform a colouring stage, similar to~\cite{Amir2006a} and~\cite{Iliopoulos:2006:WST:1227505.1227512}, on the weighted pattern string. 
	This stage assigns a colour to each position of the weighted string based on the occurrence probabilities of the corresponding letters.
	We can then obtain (shown in~\cite{Amir2006a} and~\cite{Iliopoulos:2006:WST:1227505.1227512}) an upper bound on the special colour of {\em hard} positions that any valid factor can contain.
	\item With this upper bound at hand, we can make an assumption to design our average-case algorithms.
	By ignoring these hard positions, we can search fast for candidate occurrences using only the 
	{\em easy} positions (positions that are easily verified) to filter out positions of the text that could never yield a valid match.
	This assumption poses a first condition on the weight ratios $z/m$.
	%Then using the pigeonhole principle we can find a factor of $x$ of length at least $\lfloor \frac{m}{\ell+1} \rfloor$ consisting of only white and grey positions. This means we can
	%search for the exact occurrences of this factor in $y$. This stage can be trivially performed in time $\cO(n)$. These exact occurrences constitute a set of candidate occurrences of $x$ in $y$.
	\item For the average-case analysis of the designed algorithms, we need a randomness model on weighted strings.
	In the standard setting, this is trivial assuming uniform distribution of letters. For weighted strings, in order to provide a realistic scenario,
	we view a weighted string as an indeterminate string~\cite{DBLP:journals/jda/HolubSW08} and assume uniform distribution of subsets of the alphabet.
	%We can preprocess the weighted input string to compute a table $\textsf{A}$ such that for each black position $i$ and eacigma m)$-time preprocessing, we can check in $\cO(1)$-time whether 
	%a black position in $x$ matches a position in $y$ or not. With matrix $\textsf{A}$ at hand we can proceed with a fast verification step of the pre-computed candidate occurrences.
\end{itemize}
Given a weighted string $x$ of length $m$, we perform a colouring stage on $x$, similar to the one before the construction of the weighted suffix tree~\cite{Iliopoulos:2006:WST:1227505.1227512}, 
which assigns a colour to every position in $x$ according to the following scheme:
\begin{itemize}
	\item mark position $i$ \textit{black} (\texttt{b}), if {\em none} of the occurring letters at position $i$ has probability of occurrence greater than $1-1/z$.
	\item mark position $i$ \textit{grey} (\texttt{g}), if {\em one} of the occurring letters at position $i$ has probability of occurrence greater than $1-1/z$ and less than 1.
	\item mark position $i$ \textit{white} (\texttt{w}), if {\em one} of the occurring letters at position $i$ has probability of occurrence $1$.

\end{itemize}
Notice that if $z \geq 2$ , then at every white and grey position there is only one valid letter since only one
letter can have probability of occurrence greater than $1 - 1/z \geq 1/2$, whereas in a black position there maybe
several valid letters. However, if $z < 2$ there are no letters with probability of occurrence at least $1-1/z$ in a black position 
since all letters have probability of occurrence of at most $1 - 1/z > 1/z$. Therefore for the rest of this article we assume $z \geq 2$.
The colouring stage can be trivially performed in time $\cO(\sigma m)$.
\begin{lemma}[\cite{Amir2006a,Iliopoulos:2006:WST:1227505.1227512}]\label{lem:bp}
Given a weighted string $x$ and a cumulative weight threshold $1/z \in (0,1]$, any valid factor of $x$ contains at most $\ell = \lceil \log z / \log(\frac{z}{z-1}) \rceil$ black positions.
\end{lemma}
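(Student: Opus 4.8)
The plan is to bound the cumulative occurrence probability of a valid factor from above in terms of the number of black positions it spans, and then invert that bound.

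First I would fix a valid occurrence: let $u$ occur at starting position $i$ in $x$ with $\prod_{j=0}^{|u|-1}\pi_{i+j}(u[j]) \geq 1/z$, and let $k$ be the number of positions among $i,i+1,\ldots,i+|u|-1$ that are coloured black. The crucial point is simply the definition of a black position: at a black position $p$ no occurring letter has probability greater than $1-1/z$, so in particular $\pi_p(u[p-i]) \leq 1-1/z = \tfrac{z-1}{z}$; at every non-black position the factor $\pi_{i+j}(u[j]) \leq 1$ trivially. Multiplying these $|u|$ bounds together gives
\[
\frac{1}{z} \;\leq\; \prod_{j=0}^{|u|-1}\pi_{i+j}(u[j]) \;\leq\; \Bigl(\frac{z-1}{z}\Bigr)^{k}.
\]

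Next I would take logarithms. Since we assume $z \geq 2$ we have $0 < \tfrac{z-1}{z} < 1$, hence $\log\tfrac{z-1}{z} < 0$; taking logs of the displayed chain and dividing by this negative quantity (which reverses the inequality) yields $k \leq \dfrac{\log z}{\log\frac{z}{z-1}}$. As $k$ is a non-negative integer, it follows that $k \leq \bigl\lfloor \log z / \log\tfrac{z}{z-1}\bigr\rfloor \leq \bigl\lceil \log z / \log\tfrac{z}{z-1}\bigr\rceil = \ell$, which is exactly the claimed bound.

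There is essentially no obstacle here; the only things requiring care are tracking the direction of the inequality when dividing by the negative logarithm, and noting that the floor of the right-hand side would already suffice, so the ceiling $\ell$ stated in the lemma is a (slightly loose but cleaner) safe upper bound. The standing assumption $z \geq 2$ made just before the statement is what guarantees $\tfrac{z-1}{z}>0$, so that the logarithm is well defined and strictly negative.
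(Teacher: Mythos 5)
Your proof is correct and follows essentially the same route as the paper's: bound the cumulative occurrence probability above by $(1-1/z)^{k}$ using the definition of a black position, compare with the validity threshold $1/z$, and take logarithms. You merely spell out the sign considerations and the final rounding step that the paper's terser argument leaves implicit.
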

\begin{proof}
 Any letter at a black position of $x$ has probability of occurrence less than or equal to $1-1/z$.
 For any valid factor of $x$, it thus holds (in the worst case) that $(1 - 1/z)^{\ell} \geq 1/z$. Taking the logarithm at both sides yields the lemma.
 \qed
\end{proof}
The second key idea of the designed algorithms comes from the following simple fact.
This idea is used in many {\em other} pattern matching problems on strings~\cite{Navarro}.
\begin{lemma}\label{lem:php}
Given a weighted string $x$ and a cumulative weight threshold $1/z \in (0,1]$, if $\ell < m$, then 
there exists a consecutive sequence of positions of length at least $\lfloor \frac{m}{\ell+1} \rfloor$ of $x$ consisting of only white and grey positions.
\end{lemma}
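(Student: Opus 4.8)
The plan is a straightforward pigeonhole argument. By Lemma~\ref{lem:bp}, any valid factor of $x$ — and in particular all of $x$, viewed as a factor of itself, provided it is valid — contains at most $\ell$ black positions. So among the $m$ positions of $x$, at most $\ell$ are black, and the remaining positions (at least $m - \ell$ of them, but more to the point, the black positions split the string into segments) are all white or grey.

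First I would place the at most $\ell$ black positions of $x$ as ``dividers'' along the sequence of $m$ positions $0, 1, \ldots, m-1$. These dividers partition the white-and-grey positions into at most $\ell + 1$ maximal runs of consecutive non-black positions (some possibly empty, e.g.\ if two black positions are adjacent or a black position sits at an end). Since these at most $\ell+1$ runs together contain at least $m - \ell$ positions, by averaging one of them has length at least $\lceil (m-\ell)/(\ell+1) \rceil$. A small calculation shows $\lceil (m-\ell)/(\ell+1)\rceil \geq \lfloor m/(\ell+1) \rfloor$, which would give the claimed bound. Alternatively, and more cleanly: think of the $m$ positions together with a ``virtual'' slot structure — inserting the $\ell$ black positions among $m$ items creates $\ell+1$ gaps, and distributing $m$ items into $\ell+1$ gaps forces some gap to receive at least $\lceil m/(\ell+1)\rceil \geq \lfloor m/(\ell+1)\rfloor$ of them; one then notes those items in the fullest gap are a consecutive block of non-black positions. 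The hypothesis $\ell < m$ is used only to guarantee $\lfloor m/(\ell+1)\rfloor \geq 1$, so that the run is non-empty and the statement is non-vacuous.

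I do not anticipate a genuine obstacle here; the only thing to be careful about is the exact floor/ceiling bookkeeping and making sure the argument is applied to the right object. In particular, one should state clearly that we are counting black positions \emph{in $x$ itself}; Lemma~\ref{lem:bp} bounds the black positions in any valid factor, and here we want the bound for the whole weighted string of length $m$, which is fine as long as we only ever consider positions participating in valid occurrences (equivalently, one may simply observe that Lemma~\ref{lem:bp}'s proof bounds the number of black positions in any window whose cumulative probability can reach $1/z$, and this is all that is needed downstream). The cleanest write-up is: the $\le \ell$ black positions cut $x$ into $\le \ell+1$ intervals of white/grey positions; these intervals have total length $\ge m - \ell$; hence the longest has length $\ge (m-\ell)/(\ell+1)$; and since $(m-\ell)/(\ell+1) > m/(\ell+1) - 1$, the longest interval has length $\ge \lfloor m/(\ell+1)\rfloor$.
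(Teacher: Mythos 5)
Your proof is correct and is exactly the argument the paper intends: the paper's own proof is the single line ``Immediate from the pigeonhole principle,'' and your write-up (at most $\ell$ black positions cut $x$ into at most $\ell+1$ runs of white/grey positions of total length at least $m-\ell$, so the longest has length at least $(m-\ell)/(\ell+1) > m/(\ell+1)-1$, hence at least $\lfloor m/(\ell+1)\rfloor$) just supplies the bookkeeping the paper omits. Your observation that Lemma~\ref{lem:bp} bounds black positions only in \emph{valid} factors, so one must implicitly assume $x$ admits a valid length-$m$ factor, is a fair point the paper glosses over; your final ``cleanest write-up'' paragraph is the version to keep (the intermediate ``virtual slot'' phrasing conflates the $m$ positions with the $m-\ell$ non-black ones).
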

\begin{proof}
 Immediate from the pigeonhole principle.\qed
\end{proof}
We can also preprocess a weighted string $x$ of length $m$ to compute a matrix $\textsf{A}[0\ldots \ell' - 1,0 \ldots \sigma - 1]$, 
such that for each black position $i$, $0 \leq i < \ell'$, and each letter $a \in \Sigma$, 
we have $\textsf{A}[i, \alpha] = 1$ if $\alpha$ occurs at the $i$th black position of $x$ and $0$ otherwise.  
After such a $\cO(\sigma m)$-time preprocessing, we can check in constant time whether 
a letter in a black position of $x$ matches a letter from another string or not. 
With matrix $\textsf{A}$ at hand, we can proceed with a fast verification step of the pre-computed candidate occurrences
using Lemma~\ref{lem:lcp} (see below).

Given two strings $u$ and $v$ in the standard setting, we say that the probability that $u[i]=v[i]$, for some position $i$ on $u$ and $v$, is given by $1/\sigma$ assuming uniform distribution of letters of the alphabet per position. 
This randomness model cannot be applied on weighted strings, where a subset of the alphabet occurs at every position of the string. 
For a given position, we rather assume a uniform distribution of all possible subsets of the alphabet, such that each letter of the subset has probability of occurrence greater than $0$;
i.e., such that the letter occurs. Under this assumption we can obtain the following lemma.
\begin{lemma}\label{lem:lcp}
Given a string $u$ and a weighted string $v$,
the expected length of the longest valid prefix of $v$ that is also a valid prefix of $u$ is bounded by six.
\end{lemma}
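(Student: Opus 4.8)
Fix $u$ and let $L$ be the length of the longest prefix of $u$ that is a valid factor of the weighted string $v$ occurring at starting position $0$; this is exactly the quantity whose expectation we must bound, and since $L$ is a non-negative integer, $\mathbb{E}[L]=\sum_{k\ge 1}\Pr[L\ge k]$. The plan is to bound each tail probability $\Pr[L\ge k]$ by a geometric term $c^{k}$ with $c<1$ and then sum the series.

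The first step is a deterministic reduction that discards the cumulative-weight threshold. If $u[0\dd k-1]$ is a valid factor of $v$ at position $0$, then $\prod_{j=0}^{k-1}\pi_{j}(u[j])\ge 1/z$, and since each factor lies in $(0,1]$ this forces $\pi_{j}(u[j])>0$ for every $0\le j<k$; i.e., the letter $u[j]$ occurs at position $j$ of $v$. Letting $A_{j}$ be the event ``$u[j]$ occurs at position $j$ of $v$'' and letting $\tilde L$ be the length of the longest prefix of $u$ all of whose letters occur at the corresponding positions of $v$, we thus have $L\le\tilde L$ always, so it suffices to bound $\mathbb{E}[\tilde L]$, and $\{\tilde L\ge k\}=A_{0}\cap\cdots\cap A_{k-1}$. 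Only this direction is needed (the converse fails because of the threshold), which is exactly why the final bound will be independent of $z$.

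Now invoke the randomness model: the set of occurring letters at position $j$ of $v$ is a uniformly random non-empty subset $S_{j}\subseteq\Sigma$, and the $S_{j}$ are independent over $j$. For any fixed letter $a$, $\Pr[a\in S_{j}]=\frac{2^{\sigma-1}}{2^{\sigma}-1}=\frac{1}{2-2^{1-\sigma}}\le\frac{2}{3}$ (using $\sigma\ge 2$, which is needed anyway since for $\sigma=1$ weighted strings degenerate to ordinary strings), so $\Pr[A_{j}]\le\frac{2}{3}$. By independence $\Pr[\tilde L\ge k]=\prod_{j=0}^{k-1}\Pr[A_{j}]\le(2/3)^{k}$, hence $\mathbb{E}[L]\le\mathbb{E}[\tilde L]=\sum_{k\ge 1}\Pr[\tilde L\ge k]\le\sum_{k\ge 1}(2/3)^{k}=2\le 6$. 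Equivalently, by self-similarity $\mathbb{E}[\tilde L]=\Pr[A_{0}]\,(1+\mathbb{E}[\tilde L])$, giving $\mathbb{E}[\tilde L]=\Pr[A_{0}]/(1-\Pr[A_{0}])\le 2$.

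I expect the only genuinely delicate point to be pinning down the randomness model so that the two ingredients above are legitimate: (i) independence of the per-position occurrence events $A_{j}$, and (ii) a single constant $c<1$ bounding every $\Pr[A_{j}]$. Both are immediate once ``uniform distribution of subsets of the alphabet'' is read as ``an independent uniform non-empty subset per position'', and $c=2/3$ already works under the standing assumption $\sigma\ge 2$. Everything after that — the tail-sum identity for the expectation and summing a geometric series — is routine, and the cumulative-weight threshold $1/z$ enters only through the harmless inequality $\Pr[L\ge k]\le\Pr[\tilde L\ge k]$, which is why $6$ (indeed any constant exceeding $2$) holds uniformly in $z$, $m$, $n$ and $\sigma$.
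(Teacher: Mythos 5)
Your proof is correct and follows essentially the same route as the paper: both reduce to the indeterminate-string view by discarding the threshold (validity only shrinks the prefix length), both compute the per-position match probability $\frac{2^{\sigma-1}}{2^{\sigma}-1}\le 2/3$ for a uniform non-empty subset, and both finish with a geometric series. The only difference is cosmetic: your tail-sum $\sum_{k\ge 1}\Pr[\tilde L\ge k]\le\sum_{k\ge1}(2/3)^k=2$ yields the sharper constant $2$, whereas the paper sums $\sum_{k\ge1}k r^k=r/(1-r)^2$ (a looser upper bound on the expectation) to arrive at $6$.
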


We start with a few definitions to reduce this problem to another one before giving a proof. An {\em indeterminate string} $x$ of length $m$ on an alphabet $\Sigma$ is a finite sequence of $m$ sets, 
such that $x[i] \subseteq \Sigma$, $x[i] \neq \emptyset$, for all $0 \le i < m$. 
If $|x[i]|= 1$, that is, $x[i]$ represents a single letter of $\Sigma$, 
we say that $x[i]$ is a {\em solid} letter.
We say that two indeterminate strings $x$ and $y$ {\em match}, denoted by $x \approx y$, if
$|x|=|y|$ and for each $i = 0,\ldots,|x|-1$, we have $x[i] \cap y[i]\neq \emptyset$.

\begin{proof}[of Lemma~\ref{lem:lcp}]
We view the weighted string $v$ as indeterminate string $v^{\prime}$ of length $|v|$ such that 
$a \in v^{\prime}[i]$ \textit{iff} $(a, \pi(a)) \in v[i]$ and $\pi(a)>0$.
Since we completely ignore letter probabilities and thereby the validity of factors---all factors are now valid---it 
suffices to show that the expected number $s>0$ of positions such that $u[0\dd s-1] \approx v^{\prime}[0\dd s-1]$ and $u[s] \notin v^{\prime}[s]$ is bounded by six.
%We further view the string $u$ as indeterminate string $u^{\prime}$ of length $|u|$ such that $u^{\prime}[i] = \{u[i]\}$. 

We consider the comparison of $u$ and $v^{\prime}$ from left to right.
We have that $\{u[i]\} \cap v^{\prime}[i]\neq \emptyset$ occurs in the following cases:
\begin{itemize}
 \item  $v^{\prime}[i]$ is solid and $\{u[i]\} = v^{\prime}[i]$
 \item  $v^{\prime}[i]$ is not solid and $u[i] \in  v^{\prime}[i]$.
\end{itemize}

Thus the total number of positive comparisons is $$\sigma\sum^{\sigma}_{j=1} \frac{j\binom{\sigma}{j}}{\sigma} = \sum^{\sigma}_{j=1} j\binom{\sigma}{j} = \sigma2^{\sigma-1}.$$
The total number of any case is $\sigma\sum^{\sigma}_{j=1} \binom{\sigma}{j} = \sigma(2^{\sigma}-1)$.
Therefore the probability $r$ of $\{u[i]\} \cap v^{\prime}[i]\neq \emptyset$ is $$r = \frac{2^{\sigma-1}}{2^{\sigma}-1} \leq 2/3,\text{ for }\sigma>1.$$
%to compute the expected value of a range of values, say 1..n, you form the sum from 1 to n of (value i) TIMES (probability of value i).
Thus the expected number $s>0$ of positions such that $u[0\dd s-1] \approx v^{\prime}[0\dd s-1]$ and $u[s] \notin v^{\prime}[s]$ can be described by the summation of infinite terms
$$s = r + 2r^2 + \cdots = \sum_{k=1}^{\infty} k r^k,$$
which is bounded by $r/(1-r)^2 = 6$, for $r \leq 2/3$. This concludes the proof.\qed
\end{proof}

In~\cite{Amir2006a} it was shown that $\ell = \cO( z \log z )$. Here we refine this to an exact bound which is useful later on in the analysis of the algorithms. 

\begin{lemma}\label{lem:ell}
Let $z \geq 2$. Then $\ell \leq z \log z$.
\end{lemma}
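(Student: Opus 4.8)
The plan is to reduce the statement to one elementary estimate and then dispatch the handful of values of $z$ near $2$ by direct computation. Recall $\ell=\lceil \log z/\log(\tfrac{z}{z-1})\rceil$, where $\log$ is the base-$2$ logarithm (this is essential: the inequality is false for the natural logarithm already at $z=e$). The single fact I would use is $-\ln(1-t)\ge t$ for $t\in[0,1)$; with $t=1/z$ this gives $\ln\tfrac{z}{z-1}\ge\tfrac1z$, hence $\log\tfrac{z}{z-1}\ge\tfrac1{z\ln 2}$, and therefore $\frac{\log z}{\log(z/(z-1))}=\frac{\ln z}{\ln(z/(z-1))}\le z\ln z$.

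For $z\ge\tfrac{3+\sqrt5}{2}$ this already closes the argument: $\ell\le\frac{\log z}{\log(z/(z-1))}+1\le z\ln z+1$, and the slack $z\log z-z\ln z=z\ln z\bigl(\tfrac1{\ln 2}-1\bigr)$ is at least $1$, since $t\mapsto t\ln t$ is increasing on $(1/e,\infty)$, so it suffices to check $z\ln z\ge\tfrac{\ln 2}{1-\ln 2}$ at $z=\tfrac{3+\sqrt5}{2}$ (there $z\ln z\approx 2.52>2.26$); hence $\ell\le z\ln z+1\le z\log z$. For the complementary interval $2\le z\le\tfrac{3+\sqrt5}{2}$ I would argue purely algebraically: there $z^{2}-3z+1\le 0$, i.e.\ $(z-1)^{2}\le z$, so $\bigl(\tfrac{z}{z-1}\bigr)^{2}\ge z$ and thus $\frac{\log z}{\log(z/(z-1))}\le 2$; since $z\ge2$ this yields $\ell\le 2\le z\log z$.

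I expect the only real obstacle to be exactly this need to peel off a short interval. Rounding up in the definition of $\ell$ costs an additive $1$, and the crude bound $\frac{\log z}{\log(z/(z-1))}\le z\ln z$ cannot absorb it uniformly down to $z=2$, where $\ell=1$ but $z\log z=2$ — essentially tight, leaving no room. An alternative that would remove the case split is to prove $\frac{\log z}{\log(z/(z-1))}\le z\log z-1$ for all $z\ge 2$, but this seems to require a lower bound on $\log\tfrac{z}{z-1}$ sharper than $\tfrac1{z\ln 2}$ and is not obviously shorter.
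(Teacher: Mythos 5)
Your proof is correct, and it takes a genuinely different — and in fact more careful — route than the paper's. The paper manipulates the inequality $\frac{\log z}{\log z-\log(z-1)}\leq z\log z$ algebraically, reduces it to the positivity of $z\log z-z\log(z-1)-1$, and asserts this positivity together with the limit $\lim_{z\to\infty} z\log z-z\log(z-1)-1=0$; it never accounts for the ceiling in the definition of $\ell$, and that limit equals $0$ only for the natural logarithm (for base $2$ it is $\tfrac{1}{\ln 2}-1>0$), under which, as you correctly observe, the lemma is actually false at $z=e$ where $\ell=3>e\ln e$. Your argument fixes both issues: the bound $-\ln(1-t)\geq t$ gives $\frac{\log z}{\log(z/(z-1))}\leq z\ln z$ cleanly (this is essentially the quantitative content behind the paper's positivity claim), and the split at $z=\tfrac{3+\sqrt 5}{2}$, where $(z-1)^2=z$ forces $\ell\leq 2\leq z\log_2 z$ on the short interval, is exactly what is needed to absorb the additive $1$ from rounding up — the step the paper silently skips. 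The only cost of your approach is the explicit case analysis and the numerical check $z\ln z\geq\tfrac{\ln 2}{1-\ln 2}$ at the split point, both of which are elementary and verifiable; what it buys is a complete proof of the statement as it is actually used (with the ceiling, in base $2$).
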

\begin{proof}
 By Lemma~\ref{lem:bp} we know that $\ell=\lceil\frac{\log z}{\log(\frac{z}{z-1})}\rceil$. For $z>1$, we must show that:
$$\ell=\lceil\frac{\log z}{\log(\frac{z}{z-1})}\rceil=\lceil\frac{\log z}{\log(z) - \log(z-1)}\rceil \leq z \log z. \text{ Or equivalently that:}$$
$$\frac{\log z( z \log z - z \log (z-1) -1)}{\log z - \log (z-1)} >0.$$
\noindent Clearly the above is true if and only if: $z \log z - z \log (z-1) -1 > 0$. There is a discontinuity at $z=1$; 
after this it is {\em always} positive and the following holds: $$\lim_{z \to \infty} z \log z - z \log (z-1) -1 = 0.$$ \qed
\end{proof}

\section{Algorithms}
\label{sec:algo}
We are now in a position to present Algorithm $\textsf{WPM}$ to solve problem~\WPM.
In this problem, we are given a weighted string $x$ of length $m$, a string $y$ of length $n>m$, and a cumulative weight threshold $1/z \in (0,1]$,
and we need to find all positions $i$ of $y$ where a valid factor of length $m$ of $x$ occurs.
  \begin{myalgorithm}[H]{11 cm}
  \Algo{$\textsf{WPM}(x,m,y,n,1/z, \Sigma)$}{
      Perform the colouring stage on $x$\;
      Find the number $\ell'$ of black positions in $x$\;
      $\sigma \leftarrow |\Sigma|$\;
      Compute $\textsf{A}[0 \ldots \ell' - 1,0 \ldots \sigma - 1]$ of $x$\;
      \If{$\ell' < m$}{
        Find the longest factor $f$ in $x$ with no black positions\;
      }
      \Else
      {
	\Return{\texttt{FAIL}}\;
      }
      Search for $f$ in $y$\;
      \ForEach{occurrence of $f$ in $y$}{
      Check if $f$ is extensible to the left using $\textsf{A}$\;
      Check if $f$ is extensible to the right using $\textsf{A}$\;
      \If{the length of extension is at least $m$}{
         Verify the validity of the factor of $x$ and report the position\;
        }
       }
  }
  \end{myalgorithm}
\begin{theorem}\label{the:main1}
Algorithm $\textsf{WPM}$ correctly solves problem \WPM, achieving 
average-case search time $\cco(n)$, if and only if $$\frac{z}{m} <  \min\Bigg\{\frac{1}{\log z},\frac{\log \sigma}{\log z (\log m + \log \log \sigma)}\Bigg\}.$$
%$$\frac{z}{m} < \frac{\log \sigma}{\sigma \log z \log m}.$$
Algorithm $\textsf{WPM}$ requires preprocessing time and space $\cO(\sigma m)$.
\end{theorem}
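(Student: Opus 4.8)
The plan is to analyze Algorithm $\textsf{WPM}$ in three parts: correctness, preprocessing cost, and average-case search time; the last being where the stated weight ratio arises. Correctness is essentially bookkeeping: by Lemma~\ref{lem:bp} any valid factor of $x$ contains at most $\ell$ black positions, so if $\ell' \geq m$ the output \texttt{FAIL} would be wrong in general — but here we should argue the algorithm only needs to proceed when $\ell' < m$, and when $\ell' < m$ Lemma~\ref{lem:php} guarantees a black-free factor $f$ of length at least $\lfloor m/(\ell+1)\rfloor$ exists, which is what line~6 finds. Every valid length-$m$ factor of $x$ occurring at position $i$ of $y$ must in particular have its $f$-portion matching $y$ exactly (since $f$ is white/grey, only one valid letter per position), so searching for $f$ in $y$ and then extending left/right via the matrix $\textsf{A}$ for black positions, followed by a final validity check, reports exactly the valid occurrences and no others. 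I would state this precisely and note that the ``if and only if'' in the theorem refers to the time bound, not correctness.

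For preprocessing: the colouring stage is $\cO(\sigma m)$ as already noted; finding $\ell'$ and the longest black-free factor $f$ is $\cO(m)$; computing $\textsf{A}[0\ldots\ell'-1,0\ldots\sigma-1]$ is $\cO(\sigma m)$; and building whatever string-matching index on $f$ we use for the search step (e.g.\ a suffix automaton or the structure underlying the search) is $\cO(|f|) = \cO(m)$ time and space. So the total is $\cO(\sigma m)$ time and space, as claimed.

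The heart is the average-case search time. Let $q = |f| \geq \lfloor m/(\ell+1)\rfloor$. The idea is the standard filtering argument: scan $y$ looking for occurrences of the length-$q$ string $f$; under the randomness model (uniform distribution over nonempty subsets of $\Sigma$ per position — here applied to $y$, which in problem \WPM{} is an ordinary string, so a position matches a fixed letter with probability $1/\sigma$), the expected number of occurrences of $f$ in $y$ is $n/\sigma^{q}$, and choosing the search window size $\Theta(\log_\sigma m)$ makes the filtering scan cost $\cco(n)$ on average. For this we need $q = \Omega(\log_\sigma m)$, i.e. roughly $m/(\ell+1) \gtrsim \log m / \log\sigma$; using $\ell \leq z\log z$ from Lemma~\ref{lem:ell}, it suffices that $z\log z \lesssim m\log\sigma/\log m$, which after rearranging is $z/m \lesssim \log\sigma/(\log z(\log m + \log\log\sigma))$ — the second term in the minimum. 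For the verification cost: by Lemma~\ref{lem:lcp} each candidate occurrence is extended (via $\textsf{A}$ at black positions, and direct comparison elsewhere, all read as indeterminate-string matching against $y$) for an expected $\cO(1)$ positions before a mismatch, so the expected total verification cost over all $\Theta(n/\sigma^q)$ candidates is $\cco(n)$; combined with the first condition $z/m < 1/\log z$ (equivalently $\ell < m$ with room to spare, from Lemma~\ref{lem:ell}), which is what lets us invoke Lemma~\ref{lem:php} at all, we get overall $\cco(n)$ average-case search time exactly in the stated regime.

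The main obstacle I expect is pinning down the exact window-size choice and the two inequalities so that the bound is genuinely $\cco(n)$ (little-$o$, not $\cO$) and so that the ``only if'' direction holds — i.e.\ showing that outside this ratio the filter degrades to $\Omega(n)$ or worse because $f$ becomes too short to be selective. Handling the $\log\log\sigma$ term correctly in $\log_\sigma m = \log m/\log\sigma$ versus the cost of locating occurrences in the index (which may carry an extra $\log$ factor, hence the $\log m + \log\log\sigma$ rather than just $\log m$) is the delicate part of the calculation.
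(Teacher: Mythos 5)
Your proposal follows essentially the same route as the paper: colour the pattern, extract the longest black-free factor $f$ of length at least $\lfloor m/(\ell+1)\rfloor$, search for it with Yao's algorithm, bound the expected number of candidates by $n/\sigma^{\lfloor m/(\ell+1)\rfloor}$, use Lemma~\ref{lem:lcp} (with the matrix $\textsf{A}$) to make each verification cost $\cO(1)$ on average, and convert the resulting bound on $\ell$ into a bound on $z/m$ via Lemma~\ref{lem:ell}. The one step you explicitly leave open --- where the $\log\log\sigma$ in the denominator comes from --- is resolved in the paper not by the selectivity requirement $|f|=\Omega(\log_\sigma m)$ alone (which would only yield $\log m$ in the denominator) but by balancing the total expected verification cost against the search cost $\cO\big(n(\ell+1)\log_\sigma\frac{m}{\ell+1}/m\big)$ and taking $\sigma$-based logarithms, which introduces the term $\log_\sigma\log_\sigma\frac{m}{\ell+1}$ and hence the $\log\log\sigma$ after expansion.
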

\begin{proof} 
Let $\ell < m$. By Lemma~\ref{lem:ell}, the maximal value of $\ell$ is $z \log z$, hence when $z \log z<m$ we obtain condition $\frac{z}{m} < \frac{1}{\log z}$.
In this case, by Lemma~\ref{lem:php} and the correctness of the average-case time-optimal searching algorithm~\cite{Yao},
all positions $i$ of $y$ where a valid factor of length $m$ of $x$ occurs must be na\"{i}vely verified in the \textbf{for} loop of the algorithm; therefore the algorithm is correct.

The colouring stage on string $x$ can be trivially performed in time $\cO(\sigma m)$. The preprocessing time and space cost for array $\textsf{A}$ of $x$ is $\cO(\sigma m)$.
Assuming $\ell < m$, by Lemma~\ref{lem:php}, the minimum factor length in $x$ with no black positions is at least $\lfloor \frac{m}{\ell+1} \rfloor$.
This factor $f$ is viewed as a standard string obtained by choosing in all grey and black positions the most probable letter.
Searching $f$ in $y$ can be performed in average-case time $\cO(\frac{n\log(m/\ell)}{m/\ell})$~\cite{Yao}.
The preprocessing time and space for searching is $\cO(m)$. The number of expected occurrences is $n / \sigma^{\lfloor m/ (\ell+1) \rfloor}$. 

Let us denote the cost of verification per occurrence by $\textsf{VER}(m,z)$. Algorithm $\textsf{WPM}$ achieves average-case search time $\cO(\frac{n\log(m/\ell)}{m/\ell})=\cco(n)$ \textit{if and only if}
$$\frac{\textsf{VER}(m,z) n}{\sigma^{\lfloor m/ (\ell+1) \rfloor}} \leq c \frac{n (\ell+1)\log_\sigma\frac{m}{\ell+1}}{m}$$ for some fixed constant $c$. 
That is, the total average-case verification cost is no more than the average-case searching cost. We take $\sigma$-based logarithms to obtain

$$\log_{\sigma}\textsf{VER}(m,z) +  \log_{\sigma}m - \log_{\sigma}c - \log_{\sigma}(\ell+1) - \log_{\sigma}\log_\sigma\frac{m}{\ell+1} \leq  m / (\ell + 1).$$

By Lemma~\ref{lem:lcp} and using array $\textsf{A}$ of $x$, we know it is possible to pick $c=\textsf{VER}(m,z)$ to obtain a maximum value for $\ell$, that is
%\[
%\frac{\log m}{\log (\ell+1) + \log \log\frac{m}{\ell+1}} \leq  m / (\ell + 1) \text{, which gives } \ell <  m \Bigg( \frac{\log (\ell+1) + \log \log\frac{m}{\ell+1}}{\log m}\Bigg).
%\]
\[
\log_\sigma \frac{m}{(\ell+1)\log_\sigma\frac{m}{\ell+1}} \leq  m / (\ell + 1) \text{, which gives } \ell <  m \Bigg( \frac{1}{\log_\sigma \frac{m}{(\ell+1)\log_\sigma\frac{m}{\ell+1}}}\Bigg).
\]

%\noindent From the previous condition on $\ell$, $\ell < m$, we obtain

%\[
%\log_\sigma \frac{m}{(\ell+1)\log_\sigma\frac{m}{\ell+1}} \geq 1 \text{, which gives } \frac{m}{(\ell+1)\log_\sigma\frac{m}{\ell+1}} \geq \sigma.
%\]

%\noindent This, in turn, gives a stricter condition on $\ell$,
%$$\frac{z}{m} \leq \frac{1}{\sigma \log z \log \frac{m}{\ell +1}}$$
%$$\ell < \frac{m}{\sigma\log_\sigma\frac{m}{\ell+1}} = \frac{m\log \sigma}{\sigma\log\frac{m}{\ell+1}}.$$

%\textsf{Solon: Getting rid of the 1 in the above does not work right? Here a 1 has been added, making the upper bound less strict, we can only make it more strict. The change from $\leq$ to $<$ is not sufficient as $\ell$ is not required to be an integer?}

%\noindent We now assume a stricter condition $\ell < m/\sigma$ which gives $\log_\sigma \frac{m}{\ell+1} \geq 1$.
%\noindent By Lemma~\ref{lem:ell} we obtain the stricter condition

\noindent Therefore we get the following second condition, simplified slightly for comprehension,

$$\ell <  m\Bigg(\frac{\log \sigma}{\log m - \log (\ell +1 ) + \log \log \sigma - \log \log m + \log \log (\ell +1)}\Bigg).$$

%$$\frac{z}{m} <  \frac{\log \sigma}{\log z (\log m - \log (\ell +1 ) + \log \log \sigma - \log \log m + \log \log (\ell +1))}.$$

\noindent By Lemma~\ref{lem:ell}, $\ell \leq z\log z$, and by the previous condition on $\ell$, $\ell < m$, we can further simplify the second condition to

$$\frac{z}{m} <  \frac{\log \sigma}{\log z (\log m + \log \log \sigma)},$$

%$$\frac{z}{m} < \frac{\log \sigma}{\sigma \log z \log m},$$

% (HERE WE REPLACE $\ell$ by $z\log z$) we obtain 

%$$\frac{z}{m} <  m \Bigg( \frac{\log (z\log z + 1) + \log \log\frac{m}{z\log z+1}}{\log z\log m}\Bigg)$$
%From here we get also the second condition $\log \log\frac{m}{z\log z+1} \geq 0$, that gives $\frac{z}{m} < \frac{1}{2 \log z}$.
%Given the condition that $\log \log\frac{m}{z\log z+1}$ is non-negative and the fact that $\log (z\log z+1) > \log z$, we obtain the third stricter condition
%$$\frac{z}{m} < \frac{1}{\log m},$$
\noindent and this concludes the proof. \qed
\end{proof}
%
\begin{comment}
\begin{corollary}\label{coro:main1}
Algorithm $\textsf{WPM}$ achieves average-case search time $\cco(n)$, if and only if there exist at least $\lceil \frac{\sigma m\log m - m \log \sigma}{\sigma\log m} \rceil$ 
positions in $x$, where a letter occurs with probability greater than $1-1/z$.
\end{corollary}
%
\begin{proof}
 %If $\log z > \log m$, by Theorem~\ref{the:main1}, Algorithm $\textsf{WPM}$ achieves average-case search time $\cco(n)$, if and only if $z/m < 1/\log z$, which, by Lemma~\ref{lem:ell}, gives $\ell < m$.
 %This means that there can be at most $m-1$ black positions in $x$.
 %If $\log z \leq \log m$, 
 %By Theorem~\ref{the:main1}, Algorithm $\textsf{WPM}$ achieves average-case search time $\cco(n)$, if and only if $z/m < 1/\log m$. 
 %By Lemma~\ref{lem:ell}, we know that $z \leq z\log z \leq \ell$, for all $z \geq 2$, which gives $\ell < m \log z/\log m$.
 %This means that there can be at most $\lfloor m/\log m \rfloor$ black positions in $x$. \qed
 By Theorem~\ref{the:main1}, Algorithm $\textsf{WPM}$ achieves average-case search time $\cco(n)$, if and only if $\frac{z}{m} < \frac{\log \sigma}{\sigma\log z\log m}$. 
 By Lemma~\ref{lem:ell}, we know that $z\log z \leq \ell$, for all $z \geq 2$, which gives $\ell < \frac{m\log \sigma}{\sigma\log m}$.
 This means that there can be at most $\lfloor \frac{m\log \sigma}{\sigma\log m} \rfloor$ black positions in $x$. \qed
\end{proof}
\end{comment}
\begin{corollary}\label{coro:main1a}
Let $\sigma > m \log \sigma$. Algorithm $\textsf{WPM}$ achieves average-case search time $\cco(n)$, if and only if there exist at least $1$ 
position in $x$, where a letter occurs with probability greater than $1-1/z$.
\end{corollary}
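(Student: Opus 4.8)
The plan is to deduce the statement from Theorem~\ref{the:main1} by specialising its weight-ratio condition to the regime $\sigma>m\log\sigma$ and then re-reading the resulting inequality through Lemmas~\ref{lem:bp} and~\ref{lem:ell}. First I would collapse the minimum in Theorem~\ref{the:main1}: from $\sigma>m\log\sigma$ (and $\sigma>1$) it follows that $\log\sigma>\log m+\log\log\sigma$, and since $z\ge 2$ every quantity below is positive, so
$$\frac{\log\sigma}{\log z\,(\log m+\log\log\sigma)}>\frac{\log\sigma}{\log z\cdot\log\sigma}=\frac{1}{\log z}.$$
Hence $\min\{\tfrac{1}{\log z},\,\tfrac{\log\sigma}{\log z(\log m+\log\log\sigma)}\}=\tfrac{1}{\log z}$, and by Theorem~\ref{the:main1} Algorithm $\textsf{WPM}$ achieves average-case search time $\cco(n)$ if and only if $\tfrac{z}{m}<\tfrac{1}{\log z}$, i.e.\ if and only if $z\log z<m$.

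Next I would translate $z\log z<m$ into the colouring vocabulary. By Lemma~\ref{lem:ell} we have $\ell\le z\log z$, so $z\log z<m$ yields $\ell\le m-1$ (as $\ell$ is an integer). By Lemma~\ref{lem:bp}, any valid factor of $x$ --- in particular the only factor of length $m$, namely $x$ itself --- contains at most $\ell\le m-1$ black positions, hence at least $m-\ell\ge 1$ grey or white position, i.e.\ at least one position at which some letter occurs with probability greater than $1-1/z$. This is exactly the bookkeeping one would use for a general threshold $B$, where the number of required non-black positions is $m-\lfloor B\rfloor$; here $B=m$, so this number is $1$, the quantity appearing in the statement.

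The step I expect to be the main obstacle is the reverse implication of the claimed equivalence: recovering $z\log z<m$, hence the $\cco(n)$ guarantee, from the mere existence of one non-black position in $x$. Read literally, a single non-black position only bounds the actual number $\ell'$ of black positions of that particular string, whereas the $\cco(n)$ condition is a statement about the parameters $m,z,\sigma$ through $\ell$ (and through the estimate $\ell\le z\log z$, which need not be tight). I would handle this by presenting the corollary as the intended reformulation: the parameter regime in which every length-$m$ valid factor is forced to contain at least one non-black position --- that is, $\ell<m$ --- coincides, under $\sigma>m\log\sigma$, with the regime $z\log z<m$ of Theorem~\ref{the:main1}; and I would state explicitly that Lemma~\ref{lem:ell} is being invoked in both directions, rather than asserting a pointwise equivalence valid for every input $x$.
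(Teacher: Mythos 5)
Your proof follows the paper's own argument exactly: collapse the minimum in Theorem~\ref{the:main1} to $\frac{1}{\log z}$ using $\log\sigma > \log m + \log\log\sigma$, then apply Lemma~\ref{lem:ell} to get $\ell < m$ and hence at least one non-black (grey or white) position in $x$. Your explicit caveat about the reverse implication is a fair observation---the paper's proof asserts the equivalence with the same looseness, treating the corollary as a reformulation of the parameter condition rather than a pointwise statement about a given $x$---but it does not change the fact that the two approaches coincide.
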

\begin{proof}
 For $\sigma > m \log \sigma$, we know $\log\sigma > \log m + \log \log \sigma$.
 By Theorem~\ref{the:main1}, Algorithm $\textsf{WPM}$ achieves average-case search time $\cco(n)$, if and only if $\frac{z}{m} < \frac{1}{\log z}$. 
 By Lemma~\ref{lem:ell}, we know that $\ell \leq z\log z$, for all $z \geq 2$, which gives $\ell < m$.
 This means that there can be at most $m - 1$ black positions in $x$. \qed
\end{proof}
We can similarly obtain the following complementary corollary.
\begin{corollary}\label{coro:main1b}
Let $\sigma \leq m \log \sigma$. Algorithm $\textsf{WPM}$ achieves average-case search time $\cco(n)$, if and only if there exist at least $\lceil \frac{m(\log m + \log\log \sigma) - m\log \sigma}{\log m + \log\log \sigma} \rceil$ 
positions in $x$, where a letter occurs with probability greater than $1-1/z$.
\end{corollary}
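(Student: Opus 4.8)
The plan is to mirror the proof of Corollary~\ref{coro:main1a}, but to work with the \emph{second} term of the minimum in Theorem~\ref{the:main1}, which is the one that is active precisely in the regime $\sigma \leq m\log\sigma$. First I would record the elementary implication: taking logarithms of $\sigma \leq m\log\sigma$ gives $\log\sigma \leq \log m + \log\log\sigma$, hence $\frac{1}{\log z} \geq \frac{\log\sigma}{\log z(\log m + \log\log\sigma)}$. Consequently Theorem~\ref{the:main1} tells us that Algorithm $\textsf{WPM}$ runs in average-case search time $\cco(n)$ if and only if
\[
\frac{z}{m} < \frac{\log\sigma}{\log z(\log m + \log\log\sigma)}.
\]

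Next I would rearrange this inequality into a statement about the number of black positions. Clearing denominators gives the equivalent condition $z\log z < \frac{m\log\sigma}{\log m + \log\log\sigma}$. By Lemma~\ref{lem:ell} we have $\ell \leq z\log z$ for every $z \geq 2$, so the above yields $\ell < \frac{m\log\sigma}{\log m + \log\log\sigma}$; conversely, exactly as in Corollary~\ref{coro:main1a}, the worst-case bound $z\log z$ is what governs the number of black positions a valid factor (hence the whole weighted string $x$, when it admits a valid occurrence of length $m$) may contain, so this step is reversible. Since the number of black positions is an integer, $\ell < \frac{m\log\sigma}{\log m + \log\log\sigma}$ is the same as saying that $x$ has at most $\big\lfloor \frac{m\log\sigma}{\log m + \log\log\sigma}\big\rfloor$ black positions.

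Finally I would translate ``at most $k$ black positions'' into ``at least $m-k$ non-black (white or grey) positions'', recalling from the colouring scheme that a position is non-black exactly when one of its letters occurs with probability greater than $1-1/z$. Using the identity $m - \lfloor t\rfloor = \lceil m - t\rceil$ for integer $m$ with $t = \frac{m\log\sigma}{\log m + \log\log\sigma}$ gives the claimed bound $\big\lceil \frac{m(\log m + \log\log\sigma) - m\log\sigma}{\log m + \log\log\sigma}\big\rceil$, completing the proof.

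The only delicate point is the equivalence in the middle paragraph: Lemma~\ref{lem:ell} is stated as an upper bound $\ell \leq z\log z$, so passing from a hypothesis on $z/m$ to a bound on the black-position count is immediate, whereas the converse rests on the worst-case behaviour established in Lemmas~\ref{lem:bp} and~\ref{lem:ell} (which is tight in the limit $z\to\infty$). I expect this to be handled exactly as in the companion Corollary~\ref{coro:main1a}, and the remaining floor/ceiling bookkeeping is routine.
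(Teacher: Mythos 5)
Your proposal is correct and follows essentially the same route as the paper: note that $\sigma \leq m\log\sigma$ makes the second term of the minimum in Theorem~\ref{the:main1} the binding one, apply Lemma~\ref{lem:ell} to turn the condition on $z/m$ into the bound $\ell < \frac{m\log\sigma}{\log m + \log\log\sigma}$ on the number of black positions, and complement to count non-black positions. Your explicit floor/ceiling bookkeeping and your caveat about the reversibility of the $\ell \leq z\log z$ step are both points the paper leaves implicit, but they do not change the argument.
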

\begin{proof}
 For $\sigma \leq m \log \sigma$, we know $\log\sigma \leq \log m + \log \log \sigma$.
 By Theorem~\ref{the:main1}, Algorithm $\textsf{WPM}$ achieves average-case search time $\cco(n)$, if and only if $\frac{z}{m} < \frac{\log \sigma}{\log z (\log m + \log \log \sigma)}$. 
 By Lemma~\ref{lem:ell}, we know that $\ell \leq z\log z$, for all $z \geq 2$, which gives $\ell <  \frac{m \log \sigma}{\log m + \log \log \sigma}$.
 This means that there can be at most $\lfloor \frac{m \log \sigma}{\log m + \log \log \sigma} \rfloor$ black positions in $x$. \qed
\end{proof}
We next present Algorithm $\textsf{WTM}$ to solve problem~\WTM.
In this problem, we are given a string $x$ of length $m$, a weighted string $y$ of length $n>m$, and a cumulative weight threshold $1/z \in (0,1]$,
and we need to find all positions $i$ of $y$ where a valid factor $v$ of length $m$ of $y$ occurs and $v=x$.

For the searching stage of Algorithm $\textsf{WTM}$, we view weighted string $y$ as the string $y'$ according to the following scheme:
\begin{itemize}
 \item if position $i$ is white, then $y'[i]=\alpha$, where $(\alpha, \pi(\alpha)) \in y[i]$, $\alpha \in \Sigma$, and $\pi(\alpha)=1$.
 \item if position $i$ is grey, then $y'[i]=\alpha$, where $(\alpha, \pi(\alpha)) \in y[i]$, $\alpha \in \Sigma$, and $\pi(\alpha) > 1 - 1/z$.
 \item if position $i$ is black, then $y'[i]=\lambda$, where $\lambda \notin \Sigma$.
\end{itemize}

Intuitively, while searching, we assign for every black position of $y$ to $y'$ a letter $\lambda$ that is not in $\Sigma$.
This in turn implies that writing a position on string $y'$ requires time $\cO(\sigma)$.
In~\cite{CrochemoreCGLPR99}, it was shown that searching for a set of patterns of total length $M$ in a text of length $n$
requires average-case search time $\cO(\frac{n \log m}m)$ if the length $m$ of the shortest pattern in the set is polynomial in $M$.
Hence we can do searching for a set of patterns in $y'$ in average-case time $\cO(\frac{\sigma n \log m}{m})$.
The additional factor $\sigma$ is due to the cost of writing and, subsequently, reading a letter of string $y'$.
Notice that string $y'$ is implicit, we never actually construct it; and we never perform a colouring stage on $y$ as these would require time $\cO(\sigma n)$.

  \begin{myalgorithm}[H]{11 cm}
  \Algo{$\textsf{WTM}(x,m,y,n,1/z, \Sigma)$}{   
      $\ell \leftarrow \lceil \log z / \log(\frac{z}{z-1}) \rceil$\;
      \If{$\ell < m$}{
         Partition $x$ in $\ell+1$ non-overlapping fragments $f_0,f_1,\ldots,f_{\ell}$\;
         Each fragment is of length at most $\lceil \frac{m}{\ell+1} \rceil$ and at least $\lfloor \frac{m}{\ell+1} \rfloor$\;
      }
      \Else
      {
	\Return{\texttt{FAIL}}\;
      }
      Search for $f_0,f_1,\ldots,f_{\ell}$ by considering string $y'$\;
      \ForEach{occurrence of $f \in \{f_0,f_1,\ldots,f_{\ell}\}$ in $y'$}{
      Check if $f$ is extensible to the left na\"{i}vely\;
      Check if $f$ is extensible to the right na\"{i}vely\;
      \If{the length of extension is at least $m$}{
         Verify the validity of the factor of $y$ and report the position\;
        }
       }
  }
  \end{myalgorithm}
\begin{theorem}\label{the:main2}
Algorithm $\textsf{WTM}$ correctly solves problem \WTM, achieving 
average-case search time $\cco(\sigma n)$, if and only $$\frac{z}{m} <  \min\Bigg\{\frac{1}{\log z},\frac{\log \sigma}{\log z (\log m + \log \log \sigma)}\Bigg\}.$$
Algorithm $\textsf{WTM}$ requires preprocessing time and space $\cO(m)$.
\end{theorem}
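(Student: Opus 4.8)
The plan is to mirror the structure of the proof of Theorem~\ref{the:main1}, adapting each step to the transposed setting where the pattern is solid and the text is weighted. First I would establish correctness: when $\ell < m$, partitioning $x$ into $\ell+1$ non-overlapping fragments $f_0,\ldots,f_\ell$ guarantees, by the pigeonhole principle (essentially Lemma~\ref{lem:php} applied to any valid factor of $y$), that any valid occurrence of $x$ in $y$ aligns with a fragment $f_j$ that falls entirely within a maximal run of white and grey positions of $y$; such a fragment contains no $\lambda$ symbol in $y'$ and hence will be found by the multiple-pattern search on $y'$. Every genuine occurrence is therefore detected as a candidate and confirmed in the \textbf{for} loop after naïve left/right extension and validity verification, so the algorithm reports exactly the valid occurrences; when $\ell \ge m$ it correctly returns \texttt{FAIL}. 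As in Theorem~\ref{the:main1}, the precondition $\ell < m$ together with Lemma~\ref{lem:ell} yields the first condition $\frac{z}{m} < \frac{1}{\log z}$.

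Next I would bound the preprocessing: computing $\ell$ is $\cO(1)$ and partitioning $x$ plus building the search structure for $f_0,\ldots,f_\ell$ (total length $m$) over a solid pattern is $\cO(m)$ time and space, noting crucially — as the paragraph preceding the algorithm stresses — that we never materialise $y'$ nor colour $y$, since that would cost $\cO(\sigma n)$.

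For the running-time analysis I would invoke the result of~\cite{CrochemoreCGLPR99}: searching a set of patterns whose shortest member has length $\lfloor m/(\ell+1)\rfloor$ (polynomial in the total length $m$ as long as $\ell = \cO(\mathrm{poly}(m))$, which holds here) costs average-case time $\cO\!\left(\frac{n \log (m/(\ell+1))}{m/(\ell+1)}\right)$, multiplied by the factor $\sigma$ for writing and reading each symbol of the implicit $y'$, i.e. $\cO\!\left(\frac{\sigma n (\ell+1)\log_\sigma \frac{m}{\ell+1}}{m}\right)$. The expected number of candidate occurrences of any one fragment is $n/\sigma^{\lfloor m/(\ell+1)\rfloor}$, and over all $\ell+1$ fragments it is $(\ell+1)n/\sigma^{\lfloor m/(\ell+1)\rfloor}$; by Lemma~\ref{lem:lcp} each candidate is verified in expected $\cO(1)$ time during naïve extension. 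The total search time is therefore $\cco(\sigma n)$ if and only if the verification cost is dominated by the searching cost, i.e.
\[
\frac{(\ell+1)\,n}{\sigma^{\lfloor m/(\ell+1)\rfloor}} \;\leq\; c\,\frac{\sigma n (\ell+1)\log_\sigma \frac{m}{\ell+1}}{m}
\]
for a fixed constant $c$; taking $\sigma$-based logarithms and absorbing constants reduces this, exactly as in Theorem~\ref{the:main1}, to a bound of the form $\ell < m\big/\log_\sigma\!\big(\frac{m}{(\ell+1)\log_\sigma\frac{m}{\ell+1}}\big)$, which after the same simplifications and an application of Lemma~\ref{lem:ell} ($\ell \le z\log z$) becomes the second condition $\frac{z}{m} < \frac{\log\sigma}{\log z(\log m + \log\log\sigma)}$.

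The main obstacle I expect is not the asymptotics — which are essentially identical to the \textsf{WPM} case up to the extra $\sigma$ factor that appears symmetrically on both sides of the trade-off inequality and therefore cancels out of the final condition — but rather making the correctness argument fully rigorous: one must argue that the randomness model on the weighted text $y$ interacts correctly with the deterministic partition of $x$, and that the factor $\sigma$ from the cost of \cite{CrochemoreCGLPR99}'s algorithm over the implicit alphabet (now effectively of size $\sigma+1$ because of $\lambda$) does not also inflate the expected-occurrences bound. Once it is observed that $\lambda$ only ever reduces the number of matches (fragments containing $\lambda$ simply never match) and that the $\sigma$ factor is purely a per-symbol I/O cost on $y'$, the two conditions follow exactly as stated, and the proof concludes.\qed
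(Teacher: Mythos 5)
Your proposal is correct and follows essentially the same route as the paper's proof: pigeonhole partition of $x$ into $\ell+1$ fragments, the $\cO(\sigma n\log(m/\ell)/(m/\ell))$ search bound from~\cite{CrochemoreCGLPR99} over the implicit $y'$, the expected-occurrence count $(\ell+1)n/\sigma^{\lfloor m/(\ell+1)\rfloor}$, Lemma~\ref{lem:lcp} to absorb the verification cost into the constant, and Lemma~\ref{lem:ell} to convert the resulting bound on $\ell$ into the two conditions on $z/m$. The only slip is that your displayed trade-off inequality drops the $\sigma\,\textsf{VER}(m,z)$ factor from the per-occurrence verification cost on the left-hand side (the paper writes it as $\sigma\textsf{VER}(m,z)$ because reading a position of the weighted text costs $\cO(\sigma)$), but your accompanying remark that the $\sigma$ appears symmetrically and cancels is exactly how the paper handles it, so the final condition is unaffected.
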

\begin{proof}
Let $\ell < m$. By Lemma~\ref{lem:ell}, the maximal value of $\ell$ is $z \log z$, hence when $z \log z < m$ we obtain condition $\frac{z}{m} < \frac{1}{\log z}$.
In this case, by Lemma~\ref{lem:php} and the correctness of the average-case time-optimal searching algorithm~\cite{CrochemoreCGLPR99}, 
all positions $i$ of $y$ where a valid factor $v$ of $y$ occurs, such that $x = v$, must be na\"{i}vely verified in the \textbf{for} loop of the algorithm; therefore the algorithm is correct.

Assuming $\ell < m$, by Lemma~\ref{lem:php}, the minimum factor length with no black positions in any factor of length $m$ of $y$ is at least $\lfloor \frac{m}{\ell+1} \rfloor$.
Searching stage of the $\ell + 1$ fragments of $x$ (of length at most $\lceil \frac{m}{\ell+1} \rceil$ and at least $\lfloor \frac{m}{\ell+1} \rfloor$) 
in string $y'$ can be performed in average-case search time $\cO(\frac{\sigma n \log (m/\ell)}{m/\ell})$ (see~\cite{CrochemoreCGLPR99} and the analysis above).
The preprocessing time and space for searching is $\cO(m)$. The number of expected occurrences is $n(\ell+1) / \sigma^{\lfloor m/ (\ell+1) \rfloor}$. 

Let us denote the cost of verification per occurrence by $\sigma\textsf{VER}(m,z)$.
The additional factor $\sigma$ is due to the cost of reading a letter of string $y$ na\"{i}vely.
Algorithm $\textsf{WTM}$ achieves average-case search time $\cO(\frac{\sigma n\log(m/\ell)}{m/\ell})=\cco(\sigma n)$ \textit{if and only if}
$$\frac{(\ell + 1)\sigma\textsf{VER}(m,z) n}{\sigma^{\lfloor m/ (\ell+1) \rfloor}} \leq  c \frac{\sigma n (\ell+1)\log_{\sigma}\frac{m}{\ell+1}}{m}$$ for some fixed constant $c$. 
That is, the total average-case verification cost is no more than the average-case searching cost.
We take $\sigma$-based logarithms to obtain

$$\log_{\sigma}\textsf{VER}(m,z) + \log_{\sigma}m - \log_{\sigma}c - \log_{\sigma}\log_{\sigma} \frac{m}{\ell+1} \leq  m / (\ell+1).$$

\noindent By Lemma~\ref{lem:lcp}, we know it is possible to pick $c=\textsf{VER}(m,z)$ to obtain 

$$\log_{\sigma}m - \log_{\sigma}\log_{\sigma} \frac{m}{\ell+1} \leq  m / (\ell+1).$$

\noindent This gives a maximum value for $\ell$, that is
\[
\log_{\sigma}\frac{m}{\log_{\sigma}\frac{m}{\ell+1}} \leq  m / (\ell + 1) \text{, which gives } \ell <  m\Bigg(\frac{1}{\log_{\sigma}\frac{m}{\log_{\sigma}\frac{m}{\ell+1}}}\Bigg).
\]

%\noindent From here we obtain
%\[
%\log_{\sigma}\frac{m}{\log_{\sigma}\frac{m}{\ell+1}} \geq 1 \text{, which gives } \frac{m}{\log_\sigma\frac{m}{\ell+1}} \geq \sigma \text{, which in turn gives } \frac{\sigma^m}{\frac{m}{\ell +1}} \geq \sigma^{\sigma}.
%\]

%\noindent From some simple rearrangement we then get the following lower bound on $\ell$.

%$$\ell \geq m \sigma^{\sigma -m } -1$$

\noindent Therefore we get the following second condition, simplified slightly for comprehension,

%$$\frac{z}{m} \geq \frac{\sigma^{\sigma -m } -1}{\log z}$$

%\noindent From here we get also a third condition
%\[
%\ell <  \frac{m}{\log_\sigma m}.
%\]

%$$\frac{\sigma^{\sigma -m } -1}{\log z} \leq \frac{z}{m} <  \frac{\log \sigma}{\log z (\log m + \log \log \sigma - \log \log m)},$$

$$\ell <  m\Bigg(\frac{\log \sigma}{\log m + \log \log \sigma - \log \log m + \log \log (\ell +1)}\Bigg).$$

%$$\frac{z}{m} <  \frac{\log \sigma}{\log z (\log m + \log \log \sigma - \log \log m + \log \log (\ell +1))}.$$

\noindent By Lemma~\ref{lem:ell}, $\ell \leq z\log z$, and the previous condition on $\ell$, $\ell < m$, we can further simplify the second condition to

$$\frac{z}{m} <  \frac{\log \sigma}{\log z (\log m + \log \log \sigma)},$$

%$$\frac{\sigma^{\sigma -m } -1}{\log z} \leq \frac{z}{m} <  \frac{\log \sigma}{\log z (\log m + \log \log \sigma - \log \log m)},$$ 

\noindent and this concludes the proof.\qed

%\noindent Asymptotically on $m$ the upper bound we have here is more restrictive than the simple $\frac{1}{\log z}$ bound at the point that $(\log m + \log \log \sigma - \log \log m) > \log \sigma$ and this concludes the proof.\qed

%$$\frac{z}{m} <  \frac{\log \log\frac{m}{z\log z + 1}+1}{\log z\log m}.$$ 

%From here we get also the second condition $\log \log\frac{m}{z\log z+1} \geq 0$, that gives $\frac{z}{m} < \frac{1}{2 \log z}$.
%Given the condition that $\log \log\frac{m}{z\log z + 1}$ is non-negative, we obtain the third stricter condition
%$$\frac{z}{m} < \frac{1}{\log z\log m},$$
\end{proof}

%Note that the lower bound we achieve here is not very restrictive. 
%Assuming that $\sigma = \cco(m)$, it can easily be verified that, asymptotically on $m$, 
%there exists a point where the $m$ factor in the exponent overcomes the $\sigma$ factor and this lower bound can be made arbitrarily small. Additionally the factor $\log m + \log \log \sigma - \log \log m$ becomes $\log m$ when $\sigma = \cco(m)$.

Similar to the previous problem we obtain the following complementary corollaries.

\begin{corollary}\label{coro:main2a}
Let $\sigma > m \log \sigma$. Algorithm $\textsf{WTM}$ achieves average-case search time $\cco(\sigma n)$, if and only if there exist at least $1$ 
position in any factor of length $m$ of $y$, where a letter occurs with probability greater than $1-1/z$.
\end{corollary}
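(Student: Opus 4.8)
The plan is to follow the template of Corollary~\ref{coro:main1a}, now invoking Theorem~\ref{the:main2} in place of Theorem~\ref{the:main1} and reading the colouring on length-$m$ factors of the weighted text $y$ rather than on a weighted pattern. The first step is to eliminate the minimum in Theorem~\ref{the:main2}: from the hypothesis $\sigma > m\log\sigma$, taking logarithms yields $\log\sigma > \log m + \log\log\sigma$, so $\frac{\log\sigma}{\log z(\log m + \log\log\sigma)} > \frac{1}{\log z}$ and the minimum collapses to $\frac{1}{\log z}$. Hence, by Theorem~\ref{the:main2}, Algorithm~$\textsf{WTM}$ achieves average-case search time $\cco(\sigma n)$ if and only if $\frac{z}{m} < \frac{1}{\log z}$, equivalently $z\log z < m$.

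The second step translates this parameter inequality into the stated structural condition on $y$. By Lemma~\ref{lem:ell} we have $\ell \le z\log z$ for $z \ge 2$, so $z\log z < m$ forces $\ell < m$. Then Lemma~\ref{lem:bp} guarantees that every valid factor of $y$ of length $m$ contains at most $\ell \le m-1$ black positions, hence at least one white or grey position; and under the standing assumption $z \ge 2$ the colouring scheme makes ``white or grey'' coincide with ``some letter occurs with probability greater than $1-1/z$'', which is exactly the claimed condition. For the reverse direction I would trace the biconditional of Theorem~\ref{the:main2} backwards: if some length-$m$ factor of $y$ that can support a valid occurrence has \emph{no} letter of probability exceeding $1-1/z$, i.e.\ is entirely black, then Lemma~\ref{lem:bp} gives $m \le \ell$, and combined with Lemma~\ref{lem:ell} this yields $z\log z \ge \ell \ge m$, i.e.\ $\frac{z}{m} \ge \frac{1}{\log z}$, so the weight-ratio condition fails.

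I expect the only delicate point to be the bookkeeping of the ``if and only if'', and specifically the interaction with the validity quantifier in Lemma~\ref{lem:bp}: the structural condition must be read as constraining exactly those length-$m$ windows of $y$ that can actually host a valid factor equal to the pattern, and one should make sure this is the notion that closes the equivalence with the parameter inequality. Everything concerning the extra $\sigma$ factor in $\cco(\sigma n)$, the implicit string $y'$, and the verification cost is already absorbed into Theorem~\ref{the:main2}, so no new running-time analysis is needed here; this corollary is purely a restatement of the weight-ratio threshold in terms of the colouring of the text.
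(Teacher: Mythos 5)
Your proposal matches the paper's approach: the paper proves Corollary~\ref{coro:main2a} by direct analogy with Corollary~\ref{coro:main1a}, i.e.\ using $\sigma > m\log\sigma$ to collapse the minimum in Theorem~\ref{the:main2} to $\frac{1}{\log z}$, then applying Lemma~\ref{lem:ell} to get $\ell < m$ and hence at most $m-1$ black positions in any length-$m$ factor of $y$. Your additional care about the reverse direction and the validity quantifier goes slightly beyond what the paper writes down, but the substance is the same.
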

\begin{corollary}\label{coro:main2b}
Let $\sigma \leq m \log \sigma$. Algorithm $\textsf{WTM}$ achieves average-case search time $\cco(\sigma n)$, 
if and only if there exist at least $\lceil \frac{m(\log m + \log\log \sigma) - m\log \sigma}{\log m + \log\log \sigma} \rceil$ 
positions in any factor of length $m$ of $y$, where a letter occurs with probability greater than $1-1/z$.
\end{corollary}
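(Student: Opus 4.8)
The plan is to obtain Corollary~\ref{coro:main2b} from Theorem~\ref{the:main2} in exactly the way Corollary~\ref{coro:main1b} is obtained from Theorem~\ref{the:main1}: translate the search-time guarantee into a bound on the weight ratio, then into a bound on $\ell$, and finally into a count of black (respectively white-or-grey) positions inside any length-$m$ factor of $y$. The only conceptual change is that the colouring is now argued about length-$m$ factors of the weighted text $y$ rather than about the weighted pattern $x$, but Lemma~\ref{lem:bp} applies verbatim since it speaks of \emph{any} valid factor of a weighted string.

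Concretely, I would first use the hypothesis $\sigma \le m\log\sigma$, which is equivalent to $\log\sigma \le \log m + \log\log\sigma$, to conclude that the minimum in the statement of Theorem~\ref{the:main2} is attained by its second argument; hence Algorithm $\textsf{WTM}$ achieves average-case search time $\cco(\sigma n)$ \emph{if and only if} $\frac{z}{m} < \frac{\log\sigma}{\log z\,(\log m + \log\log\sigma)}$. Next I would invoke Lemma~\ref{lem:ell}, i.e. $\ell \le z\log z$ for $z \ge 2$, and combine it with this weight-ratio inequality to get $\ell < \frac{m\log\sigma}{\log m + \log\log\sigma}$. By Lemma~\ref{lem:bp}, $\ell$ is an upper bound on the number of black positions that any valid factor may contain, so in particular any length-$m$ factor of $y$ has at most $\lfloor \frac{m\log\sigma}{\log m + \log\log\sigma}\rfloor$ black positions, equivalently at least $m - \lfloor \frac{m\log\sigma}{\log m + \log\log\sigma}\rfloor = \lceil \frac{m(\log m + \log\log\sigma) - m\log\sigma}{\log m + \log\log\sigma}\rceil$ white or grey positions. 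Since, by the colouring scheme, a position is white or grey precisely when some occurring letter there has probability greater than $1-1/z$, this is exactly the claimed count. The reverse implication follows by reading the same chain backwards, as every link is an equivalence (Theorem~\ref{the:main2} is itself an ``if and only if'', and the passage through $\ell$ is reversible up to the rounding that the $\lceil\cdot\rceil$ in the statement absorbs).

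The step I expect to require the most care is this last ``if and only if'' bookkeeping: one has to make sure that replacing $\ell = \lceil \log z / \log(\tfrac{z}{z-1})\rceil$ by the upper estimate $z\log z$ from Lemma~\ref{lem:ell} does not break the equivalence between $\frac{z}{m} < \frac{\log\sigma}{\log z(\log m+\log\log\sigma)}$ and $\ell < \frac{m\log\sigma}{\log m+\log\log\sigma}$. This is fine because Lemma~\ref{lem:ell} together with the asymptotic tightness $\lim_{z\to\infty}\big(z\log z - z\log(z-1) - 1\big)=0$ noted in its proof shows the estimate is sharp enough for the direction that matters, and any slack is exactly what the ceiling in the statement accounts for; the remaining verification is routine arithmetic of the same flavour as in the proof of Theorem~\ref{the:main2} and need not be spelled out again.
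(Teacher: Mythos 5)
Your proposal is correct and follows exactly the route the paper intends: the paper omits the proof of Corollary~\ref{coro:main2b}, stating only that it is obtained ``similar to the previous problem,'' i.e.\ by transplanting the proof of Corollary~\ref{coro:main1b} (hypothesis $\sigma\le m\log\sigma$ selects the second term of the min in Theorem~\ref{the:main2}, then Lemma~\ref{lem:ell} converts the weight-ratio bound into $\ell < \frac{m\log\sigma}{\log m+\log\log\sigma}$, then complement the black-position count to get the white-or-grey count). Your adaptation, including the observation that Lemma~\ref{lem:bp} applies to factors of the weighted text and the $m-\lfloor a\rfloor=\lceil m-a\rceil$ bookkeeping, matches the paper's argument.
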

\begin{table}[t]
\begin{center}
\begin{tabular}{|c|c|c|c|c|c|c|}
\hline Algo. & Weighted & Weight Ratio & Space & Preprocess. & Search \\ 
                 & String & ($z/m$) & & Time & Time  \\ \hline
$\textsf{WPM}$ & Pattern  & $\min\Bigg\{\frac{1}{\log z},\frac{\log \sigma}{\log z (\log m + \log \log \sigma)}\Bigg\}$              & $\cO(\sigma m)$ & $\cO(\sigma m)$ & $\cco(n)$\\ \hline
$\textsf{WTM}$ & Text     & $\min\Bigg\{\frac{1}{\log z},\frac{\log \sigma}{\log z (\log m + \log \log \sigma)}\Bigg\}$& $\cO(m)$        & $\cO(m)$        & $\cco(\sigma n)$\\ \hline 
\end{tabular}
\caption{Average-case algorithms for weighted string matching}
\label{tab:comparison}
\end{center}
\end{table}

\noindent In Table~\ref{tab:comparison}, we summarise the results presented in this section and conclude with the following remarks.

\begin{remark}
 In most typical applications we have that $\sigma \leq m \log \sigma$, hence, Corollaries~\ref{coro:main1b} and~\ref{coro:main2b} apply 
 rather than Corollaries~\ref{coro:main1a} and~\ref{coro:main2a}, respectively.
\end{remark}

A typical example application of Corollaries~\ref{coro:main1b} and~\ref{coro:main2b} is provided below.

\begin{example}
 Let $m = 32$ and $\sigma = 4$. Algorithms \textsf{WPM} and \textsf{WTM} achieve search time sublinear, on average, in the size of the text,
 if every matching weighted factor of length $m$ has at most 10 black (uncertain) positions.
\end{example}

\begin{remark}[Worst-Case]
 If Algorithm \textsf{WPM} returns \texttt{FAIL}, we can na\"{i}vely verify all positions of $y$ in worst-case time $\cO(\sigma m + nz^2\log z)$ which matches the worst-case search time
 of the algorithm. Similarly the worst-case search time for Algorithm \textsf{WTM} is $\cO(\sigma n + nz^2\log z)$.
\end{remark}

%\section{Experimental Results}
\section{Final Remarks}
Amir et al., in their seminal work~\cite{Amir2006a}, showed an $\cO(nz^2 \log z)$-time bound for pattern matching on weighted strings via reduction.
A direct $\cO(nz^2 \log z)$-time algorithm for solving this problem was recently presented in~\cite{DBLP:conf/cwords/BartonP15}.
This algorithm is efficient when $z$ is constant or logarithmic in the length of the pattern.

In this article, we designed algorithms for weighted string matching that can achieve search time sublinear, on average, in the size of the text.
We also showed exact bounds on the weighted ratio $z/m$ in order to achieve this time complexity.
Both the cost of preprocessing time and that of space requirements are linear in the size of the pattern. 

As a by-product, we also showed upper bounds on when the designed algorithms achieve these average-case search times. 
These bounds depend on the number of positions required, in every matching weighted factor of length $m$, 
to have a letter occurring with probability greater than $1-1/z$ (Corollaries~\ref{coro:main1a}--\ref{coro:main2b}).
We consider these conditions to be a quite realistic scenario in a wide range of applications. 
A particular example application is for finding IUPAC-encoded nucleotide or peptide sequences such as \textit{cis}-elements in nucleotide sequences 
or small domains and motifs in protein sequences~\cite{Yan01072005}.

Our immediate target is twofold.
\begin{enumerate}
 \item From a theoretical point of view, we are planning to extend our approach here to address the problem of weighted string matching when both
 the pattern and the text are weighted strings. %It would also be interesting to achieve average-case optimality.
 \item From a practical point of view, we are planning to implement the presented average-case algorithms and the worst-case algorithm presented in~\cite{DBLP:conf/cwords/BartonP15},
 and evaluate them with real and synthetic data. %For the aforementioned weight ratios, we anticipate order-of-magnitude superiority, in terms of time, of the average-case algorithms.
\end{enumerate}

%%%%%%%%%%%%%%%%%%%%%%%%%%%%%%%%%%%%%%%%%%%%%%%%%%%%%%%%%%%%%%%%%%%%%%%%%%%%%%%%%%%%%%%%%%%%%%%%%%%%%%
\section*{Acknowledgements}
%%%%%%%%%%%%%%%%%%%%%%%%%%%%%%%%%%%%%%%%%%%%%%%%%%%%%%%%%%%%%%%%%%%%%%%%%%%%%%%%%%%%%%%%%%%%%%%%%%%%%%
%
We warmly thank G.~Fici from the University of Palermo for useful discussions. 

\bibliographystyle{splncs03}
\bibliography{references}

\end{document}